\documentclass{article}
\usepackage{spconf,graphicx}
\usepackage{lmodern}
\usepackage[latin1]{inputenc} 
\usepackage[english]{babel}
\usepackage[T1]{fontenc}

\usepackage{booktabs} 
\usepackage{multirow}
\usepackage[skip=5pt]{caption}
\setlength{\belowcaptionskip}{-10pt}
\usepackage{subcaption}
\usepackage{float}
\usepackage{flafter}  
\graphicspath{{figure/}} 

\usepackage{amsmath, amsthm, amssymb, amsfonts}
\numberwithin{equation}{section}
\newtheorem{theorem}{Theorem}[section]
\newtheorem{lemma}[theorem]{Lemma}

\usepackage{hyperref}

\title{A Novel Audio Representation using Space Filling Curves}
%
%
%
\twoauthors
  {Alessandro Mari}
	{\small EPFL, 1015, Lausanne, Switzerland\\
	\small alessandro.mari@epfl.ch}
  {Arash Salarian\sthanks{This work was supported by Logitech Europe SA.}}
	{\small Logitech Europe S.A., 1015, Lausanne, Switzerland\\
	\small arash.salarian@ieee.org}
\begin{document}
\ninept
\maketitle
\begin{abstract}

Since convolutional neural networks (CNNs) have revolutionized the image processing field, they have been widely applied in the audio context. A common approach is to convert the one-dimensional audio signal time series to two-dimensional images using a time-frequency decomposition method. Also it is common to discard the phase information. In this paper, we propose to map one-dimensional audio waveforms to two-dimensional images using space filling curves (SFCs). These mappings do not compress the input signal, while preserving its local structure. Moreover, the mappings benefit from progress made in deep learning and the large collection of existing computer vision networks.
We test eight SFCs on two keyword spotting problems. We show that the Z curve yields the best results due to its shift equivariance under convolution operations. Additionally, the Z curve produces comparable results to the widely used mel frequency cepstral coefficients across multiple CNNs.
\end{abstract}
\begin{keywords}
Space filling curve, audio representation, deep learning, MFCC
\end{keywords}
\section{Introduction}
\label{sec:intro}

The first step when building an audio model is to choose the input space. 
Usually the data is pre-processed to obtain an intermediate low-level representation, which is subsequently used as model input. Most of these methods are frequency-based. For example, Tanweer et al. \cite{mfcc_classification} used mel frequency cepstral coefficients (MFCCs) for speech recognition  \cite{mfcc}. 
These coefficients are computed over a sliding window and stacked together to obtain a two-dimensional time-frequency image. During the computation, only the magnitude of the complex numbers is kept and the phase is discarded. This approach has shown some limitations in speech enhancement \cite{phase} and performing the Fourier transform adds extra computational costs.

Since the success of deep neural networks in image classification \cite{alexnet} and the arrival of dedicated hardware to train large models in parallel, deep neural networks have been widely applied to audio signals outperforming previous approaches \cite{hinton}. A large variety of image networks have been proposed, such as convolutional neural networks (CNN, \cite{lenet}).
These can be used in combination with time-frequency images, although the two axes are semantically different from the horizontal and vertical axes of an image.


Separating the construction of an appropriate audio representation from the design of the model architecture might not be optimal for the task at hand. Hence some authors (e.g. \cite{raw_audio,raw_audio_2}) used raw audio waveform representations as inputs of one-dimensional CNNs.
Although the additional cost of the pre-processing step is suppressed, connecting signals that are far apart, typically requires deeper architectures to increase the receptive field (RF).

In this study, we investigate a novel approach as an alternative to frequency domain based inputs and to raw audio waveforms. Space filling curves (SFC) enable us to map audio waveforms to two-dimensional images. In our approach, the input signal is not compressed and no information is lost since the SFCs act as bijective maps. By converting the audio samples to two-dimensional images, it is possible to leverage advanced deep neural networks from Computer Vision (CV).
SFCs also reduce the distance between indices compared to one-dimensional indexing schemes, which guarantees the same receptive field with fewer layers.
 Finally, in a potential hardware implementation, the mapping is constant and does not need any runtime computations.

SFCs have been used together with CNNs in the past. For example, Yin et al. \cite{dna} have used the Hilbert curve \cite{hilbert} to combine consecutive representation of $k$-mers for the prediction of the chromatin state of a DNA sequence. Tsinganos et al. \cite{sEMG} have used the Hilbert curve to associate surface electromyography (sEMG) signals with hand gestures. In their experiments, the input image was obtained by mapping the time series into a two-dimensional image for each sEMG channel, which is similar to our technique. SFCs were also used for malware classification and detection \cite{malware1,malware2}. In their work the authors have mapped the code of the program, which can be viewed as a sequence of bytes, to pixels of an image. To the best of our knowledge, SFC mappings have not been used in the context of audio representation.

\section{Space Filling Curves}

In this paper, a SFC is a bijective mapping $C_k: [N_k^2] \rightarrow  [N_k] \times  [N_k]$, where $N_k=b^k$, $[n]=\{1,\dots,n\}$ and the SFC image representation of an audio sample $s=\{s_i\}_{i=0}^L$ is obtained by $I_k(s)(i,j)  =s_{C^{-1}_k(i,j)}$. A SFC can be interpreted as an ordering of the pixels of an image, or as an indexing scheme for time series. Note that this definition differs from the usual one given by Peano \cite{peano}, which refers to the limiting process ($k\rightarrow \infty$) after a proper normalization of the input and the output. We can distinguish two families of SFCs: recursive space filling curves (RSFC) and non-recursive space filling curves (NRSFC).

\subsection{Recursive space filling curves}
A RSFC is built recursively, where $C_{k+1}$ is obtained by subdividing the cell $(i,j)$ of the curve $C_k$ into $b^2$ cells and by modifying the curve $C_{k}$ according to a set of rules. 
A large number of RSFCs exist. In this paper, we focus on the following RSFCs: the Hilbert curve \cite{hilbert}, the Z curve (also known as the Z-order \cite{z}), the Gray curve \cite{gray_curve}, the H curve \cite{H_curve}, and a curve, that we will call OptR, proposed by Asano et al. \cite{asano_space-filling_1997}. A representation of these curves with $k=3$ can be found in Figure \ref{fig:rsfc}. Consecutive points $C_k(i),C_k(i+1)$ are connected by a line. Dotted lines represent jumps.

\begin{figure}[ht]
    \centering
    \includegraphics[width=0.88\linewidth]{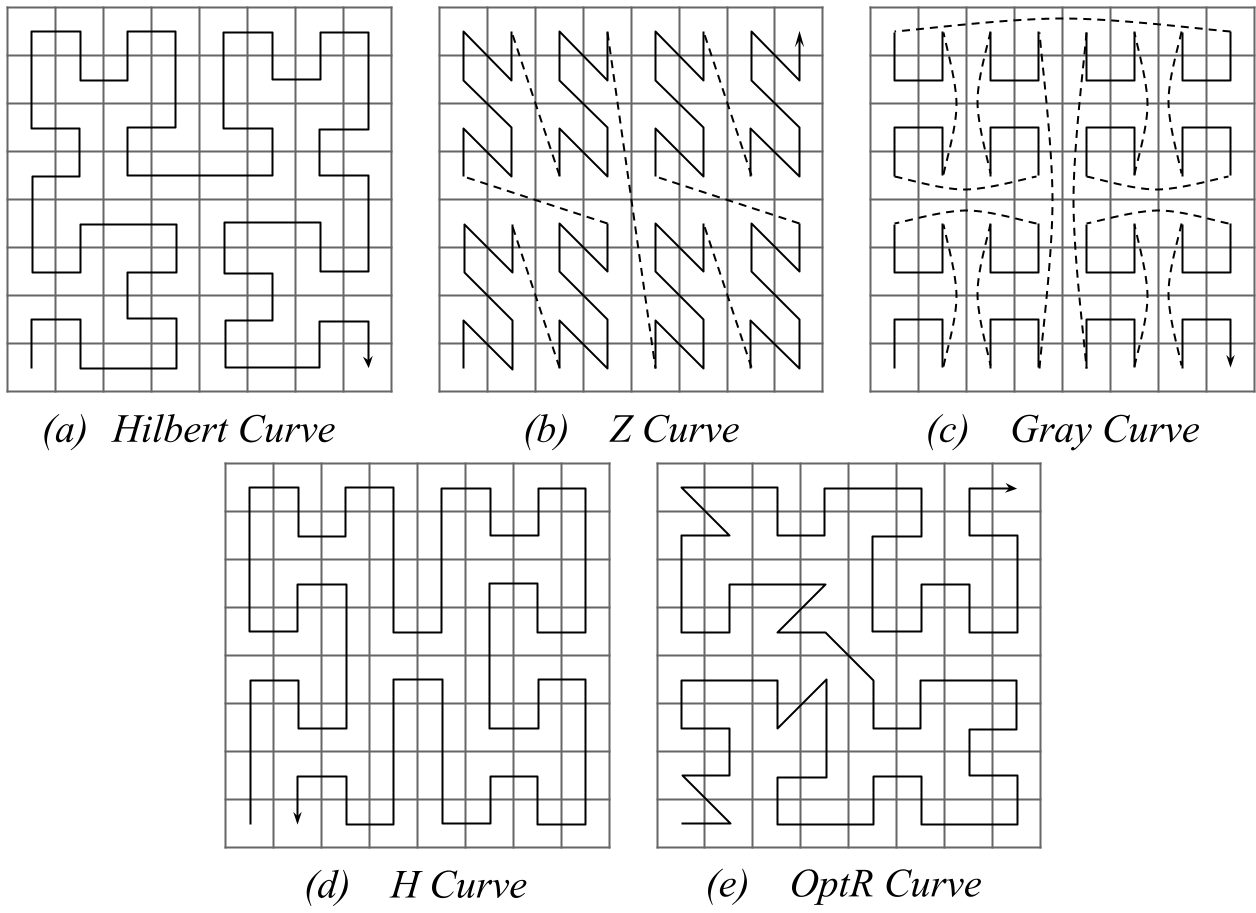}
    \caption{Recursive space filling curves of order $k=3$.}
    \label{fig:rsfc}
\end{figure}

The Hilbert curve shows good performance in locality preservation, which is desirable in the context of audio samples: data points close in time should be close in the image representation in order to exploit the local nature of the convolution layers. Depending on the locality metric, the Hilbert curve outperforms the Z curve and the Gray curve \cite{rsfc_database}, but is also inferior to the Z curve \cite{locality}.
Nonetheless, the Z curve and the Gray curve include jumps between consecutive indices, which don't preserve locality. We included both of them to study the impact of this behavior on models.

Niedermeier et al. \cite{H_curve} have shown that the H curve almost reaches the optimal lower bound among closed cyclic curves (i.e. $\|C_k(i)-C_k(i+1)\|_\infty=1$, $\forall i$). In particular, they have shown that $\|C_k(i)-C_k(j)\|_p$, $p=1,2,\infty$ roughly behaves like $\sqrt{a|i-j|+b}$ in the worst case. This bound is smaller than $|i-j|$, which is the lower bound in the one-dimensional case  achieved by the identity mapping. It shows the benefit from using two-dimensional inputs over one-dimensional inputs, as the distance between indices is reduced. 
Note that the OptR curve is particularly complex, which might lead to poor performance, since regular patterns of the curve are harder to identify.

\subsection{Non-recursive space filling curve}

This category groups all curves that cannot be built recursively. 
We investigate three additional curves: the Scan curve, the Sweep curve, and the Diagonal curve \cite{gis}.
Figure \ref{fig:non_rec_curve} shows their representation when $k=3$. The Sweep curve is a two-dimensional ordering scheme, where the sequence is cut into equally long intervals and stacked together to obtain an image. 
The Scan curve is obtained by reversing one interval out of two, which removes all jumps. Finally, the Diagonal curve is a 45 degrees rotated version of the Scan curve restricted to the $N_k\times N_k$ grid interior. 
These curves do not have particular locality preservation properties apart from being continuous mappings, i.e. $\|C_k(i)-C_k(i+1)\|_\infty=1$ (except for Sweep). Moreover, $ \|C_k(i)-C_k(j)\|_\infty$ scales like $|i-j|$ in the worst case, which is much larger than $\sqrt{|i-j|}$.

\begin{figure}[ht]
    \centering
    \includegraphics[width=0.9\linewidth]{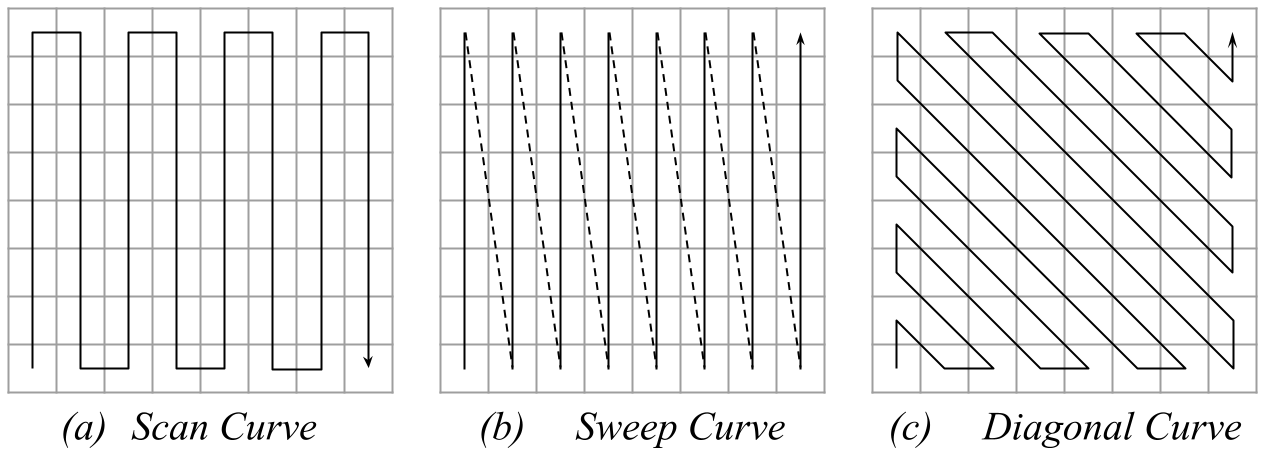}
    \caption{Non-recursive space filling curves of order $k=3$.}
    \label{fig:non_rec_curve}
\end{figure}

\subsection{Pre-processing steps}\label{sec:preprocessing}

We distinguish two pre-processing steps: functions $f\in \mathcal{P}$ that are applied on the raw audio signal and transformations $g\in \mathcal{T}$ that are performed on images. 
Assuming that the SFC mapping could be implemented in hardware,
transformations in $\mathcal{T}$ should be preferred over functions in $\mathcal{P}$.



We might first \texttt{center} audio recordings in the middle of the frame. The procedure consists of first computing the weighted average energy in separate windows of size $w$ using a Gaussian kernel with standard deviation $\sigma$, and aligning the signal that is above the threshold $th$ with the middle of the output sequence.
Note that $\texttt{center}\in \mathcal{P}$.


We might also use two additional data augmentations: \texttt{mixup} \cite{mixup} and \texttt{shift}.
\texttt{Mixup} encourages the model to behave linearly in-between two training examples by taking random convex combinations. This operation is parametrized by $\lambda$, which follows a Beta distribution $\lambda \sim \mathcal{B}(\alpha,\alpha)$, where $\alpha>0$ is some hyperparameter. In the audio setting, this method can be applied either to the raw audio waveform or to the image representation. Since the SFC mapping is a linear operator, both approaches are equivalent. We choose to apply it to the image (i.e. $\texttt{mixup} \in \mathcal{T}$).
\texttt{Shift} consists of applying random shifts $s_u$ from a uniform distribution $s_u\sim \mathcal{U}(-L/4,L/4)$, where $L$ is the input length,  to reduce the location dependency.
This operation requires that the input is already centered in order to lose less information.

\section{Experiments and Results}

\subsection{Experimental setup}

We used MFCC, one of the most common audio representations, as a baseline in this study\footnote{Code available at \url{https://github.com/amari97/sfc-audio}}. We focused on keyword spotting using two different datasets: Google Speech Commands V2 \cite{speechcommands} and a dataset based on the 1000 most frequent words of LibriSpeech's ``clean'' split \cite{librispeech}, for which Beckmann et al.  \cite{beckmann} have provided aligned labels\footnote{Available at \url{https://github.com/bepierre/SpeechVGG}}. The Speech Commands dataset contains 105829 words classified into 35 classes, where 84843, 9981, 11005 of them were respectively used for training, validation, and test. Among the 1000 most frequent words in LibriSpeech, we have selected 31 words that were appearing between 4000 and 8000 times in order to have a balanced dataset and spoken words of similar duration. 
We ended up with 164674, 2519, 2400 training/validation/test examples. 
All audio recordings were sampled at 16 kHz and had a duration of one second. 

To properly compare both approaches, chosen models should process inputs of different sizes without changing their structure. This was achieved by introducing an average pooling layer with an area equal to the input size.
Several CV networks using this layer were tested: the small MobileNetV3 network (\texttt{Mo}) \cite{mobilenetv3}, ShuffleNetV2 (\texttt{Shuffle}) \cite{shufflenetv2}, SqueezeNet (\texttt{Squeeze}) \cite{squeezenet}, MixNet (\texttt{Mix}) \cite{mixnet} and EfficientNet B0 (\texttt{Eff}) \cite{efficientnet}. Additionally, the representations were evaluated using \texttt{Res8} \cite{res8}, a small network that was specifically designed for keyword spotting tasks with MFCCs. Few modification were needed since the models had different input requirements. Since \texttt{Mo} accepts three channel images (RGB), we added a pointwise convolution to expand the input, followed by a $\texttt{ReLU}$ activation layer and $\texttt{BatchNorm}$ normalization layer \cite{batchnorm}. Our implementation of \texttt{Squeeze} added \texttt{BatchNorms} after the squeezing layer and the classifier was replaced by a fully connected layer.\footnote{We use the version 1.1 as described in the official code \url{https://github.com/forresti/SqueezeNet}.} Finally, the pooling area of \texttt{Res8} was extended to $4\times 4$ to increase the RF, and the dilation of the convolutions was set to 1,1,1,2,2,2, following the increasing dilation strategy of the Res15 network \cite{res8}. As a result, the RF of \texttt{Res8} was changed from 54 to 78.

Unless specified otherwise, all models were trained with stochastic gradient descent (SGD) with learning rate 0.5 and batch size 256 on 2 GPUs with distributed data parallel. We used an early stopping rule with 50 waiting epochs and we set to 300 the maximal number of epochs. We set the \texttt{mixup} hyperparameter $\alpha=0.2$, as in \cite{mixup}. We set $th=0.0001$, such that only signals that were 1.6 higher than a flat audio signal exceeded it. We also set the Gaussian kernel parameters to $w=100$ and $\sigma=25$, such that boundary values had a weight close to zero.
We used 40 MFCCs that were computed on frames of 0.025 seconds with overlaps of 0.015 seconds \cite{res8,accurate_mfcc}. 
Finally, we used curves of order $k=7$ to represent one second of audio, since $16000<16384=2^7\times 2^7$.

\subsection{Results}

\textbf{Curve comparison} As shown in Table \ref{tab:acc_curves_mobile}, the data augmentations improved the accuracy of the model, even if this effect was slightly less evident in the case of LibriSpeech. The benefits of these augmentations were larger for SFCs: the generated images were very different, improving the selection of invariant features. The Z curve yielded the best results in all situations and reached the baseline performance with \texttt{Data Aug.} 
Non-recursive curves performed equally well, but were inferior to Z and H. Finally, OptR was the worst one.

\begin{table}[ht]
    \centering
\resizebox{0.9\columnwidth}{!}{
    \begin{tabular}{clcccc}
    \toprule
       & Dataset &\multicolumn{2}{c}{Speech Commands}  & \multicolumn{2}{c}{LibriSpeech}   \\
       \cmidrule(lr){3-4}\cmidrule(lr){5-6}
       && SGD* & Data Aug.& SGD & Data Aug.\\
       \midrule
       &MFCC &89.3 &\textbf{93.0}  &97.9 & \textbf{98.2} \\
        \midrule
        \multirow{5}{*}{\rotatebox[origin=c]{90}{Recursive}}&Hilbert & 80.1& 89.0 &94.9 &94.1 \\
       &Z & 86.2 &\textbf{92.8} &97.4 &\textbf{98.3} \\
       &Gray &82.5 &90.4 &97.0 &96.2 \\
       &H &82.8 &91.6 &96.0 &97.8 \\
       &OptR &78.9 &88.1 &93.8 & 93.0 \\
       \midrule
       \multirow{3}{*}{\rotatebox[origin=c]{90}{Non-rec.}}&Sweep & 83.2 &90.3 &94.3 &95.5 \\
       &Scan & 83.8& 90.3&94.7 &96.1 \\
       &Diagonal & 83.4& 90.0 &95.9 &95.6 \\
        \bottomrule
    \end{tabular}}
    \caption{Test accuracy of each SFC on \texttt{Mo}. \texttt{SGD} is the baseline that only includes the SFC/MFCC computation. \texttt{Data Aug.} includes \texttt{mixup}, \texttt{shift} and \texttt{center}. *trained with an early stopping rule with 100 waiting epochs.}
    \label{tab:acc_curves_mobile}

\end{table}


\textbf{Ablation study} As shown in Table \ref{tab:acc_transformation_mobile},
centering the audio clip inside the one second frame improved the accuracy for almost all curves. In particular, the improvement was larger for recursive curves. The \texttt{Shift} step provided a large gain by forcing the model to select features that are shift invariant in time. \texttt{Mixup} still improved the previous accuracy.

\begin{table}[ht]
    \centering
    \setlength{\tabcolsep}{1.5pt}
    
    \resizebox{\columnwidth}{!}{
    \begin{tabular}{p{38mm}cccccccc}
    \toprule
       &  \multicolumn{5}{c}{Recursive}  & \multicolumn{3}{c}{Non-recursive}   \\
       \cmidrule(lr){2-6}\cmidrule(lr){7-9}
       & Hilbert & Z & Gray & H & OptR & Sweep & Scan & Diagonal \\
       \midrule
       SGD* & 80.1&86.2 &82.5 &82.8 &78.9 &83.2 &83.8 &83.4 \\
        
        + Center*& 83.1& 87.3& 84.4& 84.4&81.8 &83.5 &84.6&83.3 \\
       
        
        + Center + Shift & 84.5& 90.1& 87.9& 86.0& 85.8& 86.8& 87.1& 87.2\\
        + Center + Shift + Mixup &89.0 &92.8 &90.4 &91.6 &88.1 &90.3 &90.3 &90.0 \\
        \bottomrule
    \end{tabular}}
    \caption{Test accuracy on \texttt{Mo}. \texttt{SGD} is the baseline that only includes the SFC/MFCC computation. *trained with an early stopping rule with 100 waiting epochs.}
    \label{tab:acc_transformation_mobile}
\end{table}

\textbf{Model comparison} Table \ref{tab:model_comp} gives the results of the comparison between the Z curve and the MFCC approach. Except for \texttt{Res8}, there was no difference with the baseline. The best results were unexpectedly obtained with the largest network (\texttt{Eff}), but \texttt{Res8} achieved impressive results in terms of efficiency and accuracy with MFCCs.

\begin{table}[ht]
    \setlength{\tabcolsep}{1.5pt}
    \centering
    \resizebox{\columnwidth}{!}{
    \begin{tabular}{llcccccc}
    \toprule
        & & Res8 & Mo & Shuffle & Squeeze & Mix & Eff  \\
        \midrule
        Parameters& & 111K&1'554K&1'289K&740K&2'653K&4'052K\\
         \midrule
        \multirow{2}{*}{Speech Commands}&MFCC &94.0 &93.0 &92.9 &93.9 &93.9 &95.1 \\
                                        &Z & 85.3&92.8 &92.0 &91.2* &94.1 & 94.9\\
         \midrule
          \multirow{2}{*}{LibriSpeech}&MFCC &98.6 & 98.2 & 98.1 & 98.1 &98.8 & 99.0\\
                                        &Z &95.9 & 98.3& 97.8 & 97.9* & 98.5& 99.2 \\
        \bottomrule
    \end{tabular}}
    \caption{Test accuracy with \texttt{Data Aug.} *trained with an early stopping rule with 100 waiting epochs.}
    \label{tab:model_comp}
\end{table}

\textbf{Receptive field influence} 
Figure \ref{fig:receptive_field} shows the average output probability $\bar{p}(s)$ of the true class as a function of time shifts $s$. 
The figure revealed that $\bar{p}(s)$ was lower when the audio signal was centered ($s=0$), and when the RF was much smaller than the size of the input image ($78<128$). This effect was alleviated when increasing the RF to 125 by setting the dilation of the \texttt{Res8} filters to $1,1,2,2,4,4$. In this case, the model accuracy reached 86.5.

\begin{figure}[ht]
    \centering
    \includegraphics[width=0.65\linewidth]{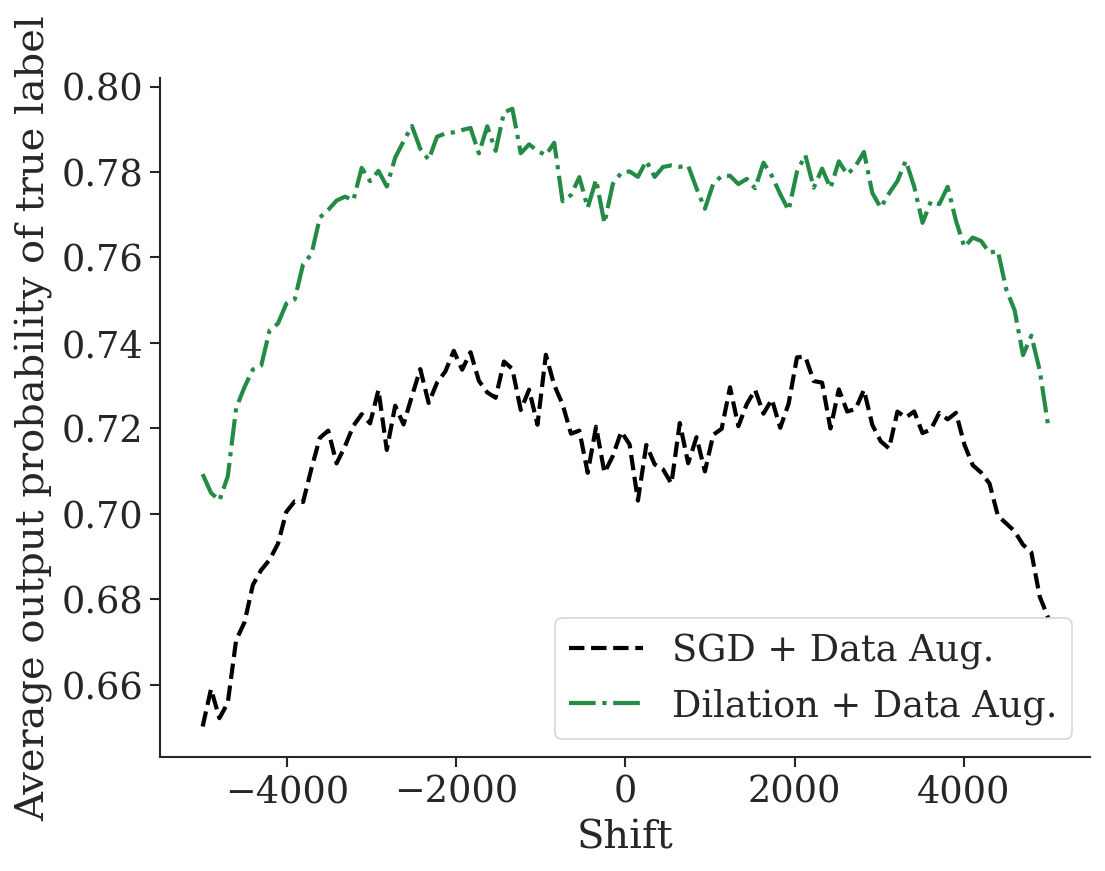}
    \caption{Average output probabilities of the true label when shifting the input sequence relatively to the center of the frame. Results obtained with \texttt{Res8} on Speech Commands.}
    \label{fig:receptive_field}
\end{figure}

\textbf{Number of parameters} Figure \ref{fig:width_mult} shows the model accuracy after changing the number of parameters by shrinking or expanding the width of the network (i.e. the number of channels of each convolution layer) by a factor $\texttt{width\_mult}$. The large gap with the baseline persisted on \texttt{Res8}. Moreover, the difference remained similar when reducing the number of parameters. Due to its prohibitively large computational cost, the standard deviations were only computed for \texttt{Mo} using cross-validation, but they showed that the gap with the baseline was not significant.

\begin{figure}[ht]
   \begin{minipage}{.68\linewidth}
    
         \includegraphics[width=1.0\linewidth]{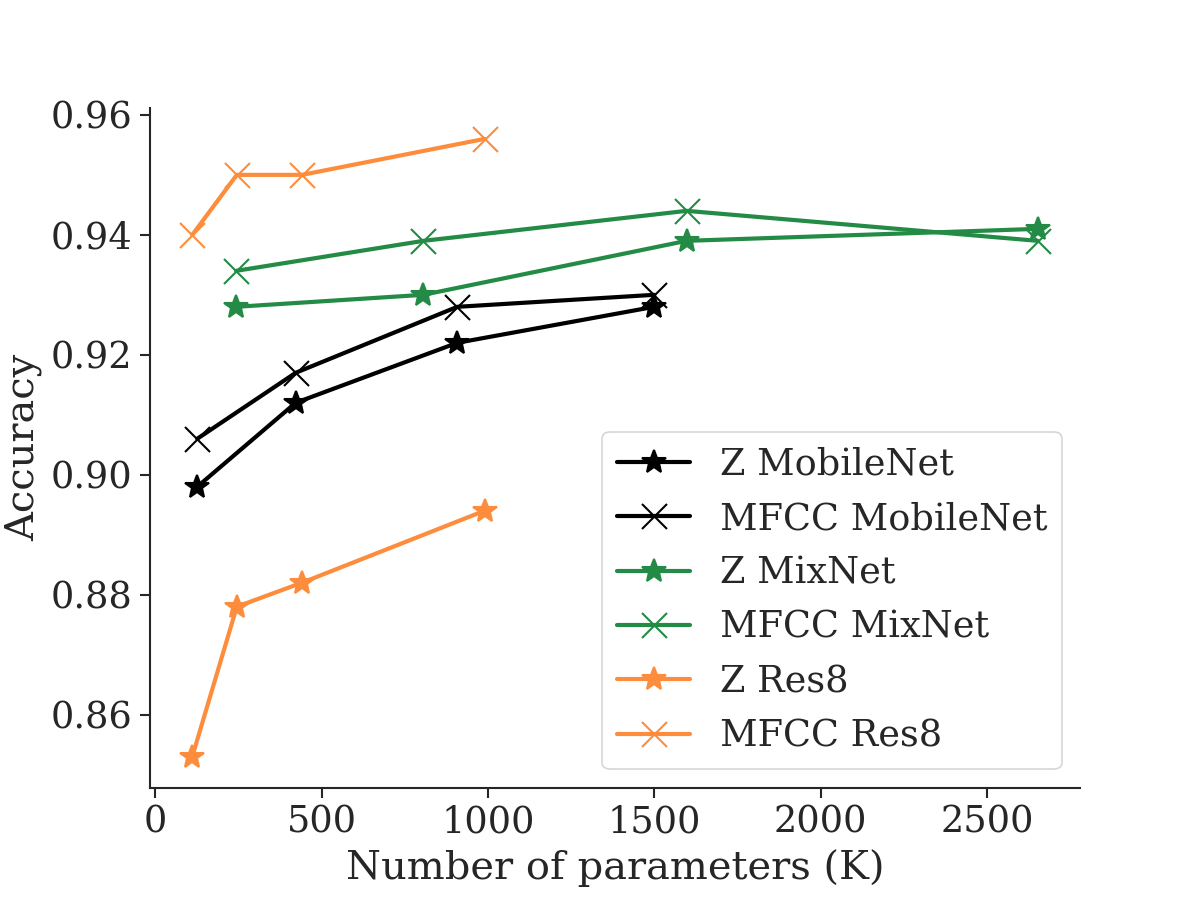}
    \end{minipage}
    \begin{minipage}{.30\linewidth}
    \centering
    \setlength{\tabcolsep}{1pt}
    \resizebox{\columnwidth}{!}{
    \begin{tabular}{p{8mm}cc}
    \toprule
       \texttt{width mult.}&Z&MFCC\\
       \midrule
       0.25&0.011&0.012\\
       0.5&0.011&0.009\\
       0.75&0.008&0.004\\
       1.0&0.013&0.006\\
        \bottomrule
    \end{tabular}}
    \end{minipage}
    \caption{\textit{Left:} Test accuracy of \texttt{Mo}, \texttt{Mix} with $\texttt{width\_mult}\in \{0.25,0.5,0.75,1\}$ and \texttt{Res8} with  $\texttt{width\_mult}\in\{1,1.5,2,3\}$. Models are trained with \texttt{Data Aug.} on Speech Commands. \textit{Right:} Standard deviation of \texttt{Mo} accuracy using 10-fold cross-validation.}
    \label{fig:width_mult}
\end{figure}

\textbf{Curve Comparison on Res8} Table \ref{tab:acc_res8} shows the performance of each curve with \texttt{Res8}. Non-recursive curves yielded the best results, while the $Z$ curve outperformed the other recursive curves. The gap with the baseline was significant.

\begin{table}[ht]
    \centering
    \setlength{\tabcolsep}{1pt}
    
    \resizebox{\columnwidth}{!}{
    \begin{tabular}{p{27mm}ccccccccc}
    \toprule
       &&  \multicolumn{5}{c}{Recursive}  & \multicolumn{3}{c}{Non-recursive}   \\
       \cmidrule(lr){3-7}\cmidrule(lr){8-10}
       &MFCC& Hilbert & Z & Gray & H & OptR & Sweep & Scan & Diagonal \\
       \midrule
       Speech Commands& 94.0 &80.0 &85.3 &76.0 &80.2 &77.7 &88.3 &89.7 &89.8 \\
       LibriSpeech&98.6 &92.5 &95.9 &92.8 &93.0 &90.6 &96.5 &97.2 &96.9 \\
        \bottomrule
    \end{tabular}}
    \caption{Test Accuracy on \texttt{Res8}. Models were trained with \texttt{Data Aug.}}
    \label{tab:acc_res8}
\end{table}

\subsection{Discussion}

Despite having jumps between consecutive indices, which could harm the model performance (see Figure \ref{fig:receptive_field}), the Z curve showed the best performance among other SFCs. Our intuition is that the Z curve guarantees a rather simple relation between the features of the hidden layers, when randomly shifting the input in time. In particular, the Z curve exhibits a shift equivariance under convolution operations due to its regular patterns (Z shape) that are always oriented in the same direction (Figure \ref{fig:rsfc}). More precisely, 

\begin{lemma}
Let $W_l$ be a $2^l\times 2^l$ discrete convolution with stride $2^l$ in $x,y$ direction on a $2^k\times 2^k$ image $I$, $k>l$, namely
$$O_l(I)(i,j)=\sum_{m,n\in [2^l]} I(i2^l+m,j2^l+n)\times W_l(m,n).$$
Let also $s$ be a sequence of numbers of length $2^{2k}$ and let 
$I_k(s)(i,j)=s_{C_k^{-1}(i,j)}$ be the corresponding image using the Z curve $C_k$. Then, if the sequence $s$ is circularly shifted by $r=d2^{2l}$, for some $d\in \mathbb{N}$ (i.e. $s^{r}(i)=s(i+r \mod 2^{2k})$), then $I_{k-l}^{-1}\circ O_l\circ I_k(s^r)$ is equal to $I_{k-l}^{-1}\circ O_l\circ I_k(s)$ up to a shift of $d$ units.
\end{lemma}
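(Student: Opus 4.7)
The plan is to exploit the recursive self-similarity of the Z curve, which is really the only property of $C_k$ that the statement uses. From the bit-interleaving (Morton) definition of the Z-order one reads off the identity
\[
C_k^{-1}(i\cdot 2^l + m,\; j\cdot 2^l + n) \;=\; 2^{2l}\,C_{k-l}^{-1}(i, j) \;+\; C_l^{-1}(m, n),
\]
valid for $i, j \in [2^{k-l}]$ and $m, n \in [2^l]$. In words, the $2^{k-l}\!\times\! 2^{k-l}$ grid of $2^l\!\times\! 2^l$ blocks that $O_l$ consumes is in bijection with the $2^{k-l}\!\times\! 2^{k-l}$ chunks of $2^{2l}$ consecutive sequence indices, and those chunks are themselves laid out according to the coarser Z curve $C_{k-l}$.

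First I would substitute the shift into this decomposition. Because $r = d\cdot 2^{2l}$ only touches the high-order bits of $C_k^{-1}$, a short modular calculation gives
\[
I_k(s^r)(i\cdot 2^l + m,\; j\cdot 2^l + n) \;=\; I_k(s)(i'\cdot 2^l + m,\; j'\cdot 2^l + n),
\]
where $(i', j') := C_{k-l}\!\bigl((C_{k-l}^{-1}(i, j) + d) \bmod 2^{2(k-l)}\bigr)$, so the within-block coordinates $(m,n)$ are preserved and only the block identity is permuted. Next I would apply $O_l$ and use that the kernel $W_l(m,n)$ depends only on the within-block offset, so summing over $(m,n)$ pulls the permutation out unchanged and yields $O_l(I_k(s^r))(i,j) = O_l(I_k(s))(i',j')$. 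Finally, composing with $I_{k-l}^{-1}$ turns the Z-order re-indexing $(i,j)\mapsto (i',j')$ of the coarse grid back into a plain cyclic shift by $d$ on the length-$2^{2(k-l)}$ output sequence, which is the claimed equivariance.

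The main obstacle will be stating the self-similarity identity cleanly with a fixed bit convention and threading the modular reduction through the two levels of indexing; everything else is bookkeeping. It is worth noting that the argument uses nothing about the values of the weights $W_l$ and relies on the Z curve precisely because the stride-$2^l$ blocks are themselves laid out as a smaller copy of the same curve. The regular, aligned Z-shape is exactly what makes this equivariance possible, matching the informal intuition that the surrounding discussion gives for why Z outperforms the other RSFCs.
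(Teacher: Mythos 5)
Your proposal is correct and follows essentially the same route as the paper: the self-similarity identity $C_k^{-1}(i2^l+m,\,j2^l+n)=2^{2l}C_{k-l}^{-1}(i,j)+C_l^{-1}(m,n)$ is exactly the paper's bit-interleaving observation (low $2l$ interleaved bits give the within-block offset and are untouched by adding $d2^{2l}$, while the high bits form the coarse Z index and are shifted by $d$), packaged as one displayed formula. Your version is arguably a cleaner write-up of the same argument, avoiding the paper's special-casing of $i=j=0$.
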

\begin{proof}
Note that the $Z$ mapping of some index $z$, expressed in binary $z\equiv z_{2k}\dots z_0$, $z_i\in \{0,1\}$, is obtained by interleaving the bits of its binary expression, namely $x=z_{2k-1}\dots z_1,\  y=z_{2k}\dots z_0$. Let $i\equiv a_{m} \dots a_0$, $j\equiv b_{m} \dots b_0$, we have $I_k(s)(i,j)=s_{a_m b_m\dots a_0b_0}$. Shifting the output sequence by $d2^{2l}\equiv d_{2k}\dots d_{2l} 0 \dots 0$ doesn't modify the digits $a_{l-1} \dots a_0$, $b_{l-1} \dots b_0$. Therefore $I_k(s^r)(i,j)=s_{a_m b_m\dots a_0b_0}^r=s_{\tilde{a}_m \tilde{b}_m \dots a_{l-1} b_{l-1}\dots a_0b_0}$. Similarly, using a stride of $2^l$ only modifies digits that are after the $2l$-th index. Let's consider the case $i=0,j=0$. Then 
$O_l\circ I_k(s^r)(0,0)=O_l\circ I_k(s)(d_{2k-1}\dots d_{2l+1},d_{2k} \dots d_{2l})$. Unfolding the image finally yields $I_{k-l}^{-1}\circ O_l \circ I_k(s) (d_{2k}\dots d_{2l})=I_{k-l}^{-1}( O_l \circ I_k(s)(d_{2k-1}\dots d_{2l+1},d_{2k} \dots d_{2l}))=I_{k-l}^{-1}(O_l\circ I_k(s^r)(0,0))=I_{k-l}^{-1}\circ O_l \circ I_k(s)(0)$, which proves the lemma for $i=0, j=0$. For the general case the result follows because we have $O_l\circ I_k(s^r)(i,j)=O_l\circ I_k(s)(\tilde{a}_m\dots\tilde{a}_l,\tilde{b}_m \dots\tilde{b}_l)$.
\end{proof}
On the other hand, curves like the Hilbert curve (Figure \ref{fig:rsfc}) include some rotation of its elementary block, which destroys the equivariance property. This intuition is further assessed by the fact that the Z curve performed well without imposing some shift invariance with data augmentations (Table \ref{tab:acc_curves_mobile}), which shows that the model has already built a coherent feature extraction.


Table \ref{tab:model_comp} shows that the SFC approach was competitive with the baseline when trained with \texttt{Data Aug.}, except for the \texttt{Res8} network, for which the large jump in the middle of the Z mapping had an influence on the accuracy (Figure \ref{fig:receptive_field}). Increasing the RF of the network improved the accuracy from 85.3 to 86.5, but remained substantially smaller than the baseline ($86.5\ll 94.0$). The number of parameters was not sufficient to justify the discrepancy observed in Table \ref{tab:model_comp}, since the gap persisted on Figure \ref{fig:width_mult}.
It might be related to the network architecture, which was specifically designed for MFCC inputs and which contained a small number of layers.
Indeed \texttt{Res8} worked best with non-recursive curves (Table \ref{tab:acc_res8}), which have an image structure similar to the MFCC representation: the image can be decomposed into two perpendicular axes --- a time axis and a feature axis --- which is not possible for the recursive curves (see Figure \ref{fig:rsfc}).

\section{Conclusion}
We have proposed an alternative audio representation to frequency-based images and raw audio waveforms using space filling curves. We have shown that it achieves comparable performance to the widely used MFCC representation when combined with deep CNNs in the context of keyword spotting tasks. In particular, the Z curve yields the best results, which is probably due to its shift equivariance under convolution operations.
Our study suggest that to leverage DNNs time-frequency decomposition should not be considered as a central dogma and simpler one-dimensional to two-dimensional mappings such SFCs might perform just as well. Future work could aim at checking the robustness of the image representation under noisy inputs and at generalizing the method to variable input lengths. 


%
%
%


\bibliographystyle{IEEEbib}
\bibliography{refs}

\end{document}